\newcommand{\BigOh}[1]{O\!\left(#1\right)}
\newcommand{\LittleOmega}[1]{\omega\!\left(#1\right)}
\newcommand{\BigTheta}[1]{\Theta\!\left(#1\right)}
\newcommand{\layer}[1]{\mathsf{layer}[#1]}
\newcommand{\older}[1]{\mathsf{older}[#1]}
\newcommand{\younger}[1]{\mathsf{younger}[#1]}
\newcommand{\nextlayer}[1]{\mathsf{nextlayer}[#1]}
\newcommand{\aux}[1]{\mathsf{aux}[#1]}
\newcommand{\nil}{\mathsf{nil}}
\newcommand{\InsertFixUp}[1]{\textsc{Insert-FixUp}(#1)}
\newcommand{\DeleteFixUp}[1]{\textsc{Delete-FixUp}(#1)}
\newcommand{\Split}[1]{\textsc{Split}(#1)}
\renewcommand{\Join}[1]{\textsc{Join}(#1)}
\newcommand{\YoungestInLayer}[1]{\textsc{YoungestInLayer}(#1)}
\newcommand{\OldestInLayer}[1]{\textsc{OldestInLayer}(#1)}
\newcommand{\MoveUp}[1]{\textsc{MoveUp}(#1)}
\newcommand{\MoveDown}[1]{\textsc{MoveDown}(#1)}
\newcommand{\Search}[1]{\textsc{Search}(#1)}
\newcommand{\Insert}[1]{\textsc{Insert}(#1)}
\newcommand{\Delete}[1]{\textsc{Delete}(#1)}
\newtheorem{theorem}{Theorem}
\newtheorem{corollary}[theorem]{Corollary}
\newtheorem{lemma}[theorem]{Lemma}
\title{Layered Working-Set Trees}
\author{Prosenjit Bose\thanks{School of Computer Science, Carleton University. \texttt{\{jit,karim,vida,jhowat\}@cg.scs.carleton.ca}. This research was partially supported by NSERC and MRI.} \and Karim Dou\"ieb$^*$ \and Vida Dujmovi\'c$^*$ \and John Howat$^*$}
\date{}
\begin{document}
\maketitle

\begin{abstract}
The \emph{working-set bound} [Sleator and Tarjan, J. ACM, 1985] roughly states that searching for an element is fast if the element was accessed recently. Binary search trees, such as splay trees, can achieve this property in the amortized sense, while data structures that are not binary search trees are known to have this property in the worst case. We close this gap and present a binary search tree called a \emph{layered working-set tree} that guarantees the working-set property in the worst case. The \emph{unified bound} [B\u{a}doiu et al., TCS, 2007] roughly states that searching for an element is fast if it is near (in terms of rank distance) to a recently accessed element. We show how layered working-set trees can be used to achieve the unified bound to within a small additive term in the amortized sense while maintaining in the worst case an access time that is both logarithmic and within a small multiplicative factor of the working-set bound.
\end{abstract}

\section{Introduction}
\label{:introduction}

Let $S$ be a set of keys from a totally ordered universe and let $X$ be a sequence of elements from $S$. Typically, one is required to store elements of $S$ in some data structure $D$ such that accessing the elements of $S$ using $D$ in the order defined by $X$ is ``fast.'' Here, ``fast'' can be defined in many different ways, some focusing on worst case access times and others on amortized access times. For example, the search times of splay trees \cite{splay-trees} can be stated in terms of the rank difference between the current and previous elements of $X$; this is the \emph{dynamic finger property} \cite{dynamicfinger-2,dynamicfinger-1}.

If $x$ is the $i$-th element of $X$, we say that $x$ is accessed at time $i$ in $X$. The working-set number of $x$ at time $i$, denoted $w_i(x)$, is the number of distinct elements accessed since the last time $x$ was accessed or inserted, or $|D|$ if $x$ is either not in $D$ or has not been accessed by time $i$.

The \emph{working-set property} states the time to access $x$ at time $i$ is $\BigOh{\lg w_i(x)}$.\footnote{In this paper, $\lg x$ is defined to be $\log_2 (x+2)$.} Splay trees were shown by \citet{splay-trees} to have the working-set property in the amortized sense. One drawback of splay trees, however, is that most of the access bounds hold only in an amortized sense. While the amortized cost of a query can be stated in terms of its rank difference between successive queries or the number of distinct queries since a query was last made, any particular operation could take $\BigTheta{n}$ time. In order to address this situation, attention has turned to finding data structures that maintain the distribution-sensitive properties of splay trees but guarantee good performance in the worst case.

The data structure of \citet{unified}, called the working-set structure, guarantees this property in the worst case. However, this data structure departs from the binary search tree model and is instead a collection of binary search trees and queues.

\citet{unified} also describe a data structure called the unified structure that achieves the \emph{unified property}, which states that searching for $x$ at time $i$ takes time $\BigOh{\min_{y \in S} \lg (w_i(y) + d(x,y))}$ where $d(x,y)$ is the rank difference between $x$ and $y$. Again, this data structure is not a binary search tree. The skip-splay algorithm of \citet{skip-splay} fits into the binary search tree model and comes within a small additive term of the unified bound in an amortized sense.

\paragraph{Our Results.} We present a binary search tree that is capable of searching for a query $x$ in worst-case time $\BigOh{\lg w_i(x)}$ and performs insertions and deletions in worst-case time $\BigOh{\lg n}$, where $n$ is the number of keys stored by the tree at the time of the access. This fills in the gap between binary search trees that offer these query times in only an amortized sense and data structures which guarantee these query times in the worst-case but do not fit in the binary search tree model. We have also shown how to use this binary search tree to achieve the unified bound to within a small additive term in the amortized sense while maintaining in the worst case an access time that is both logarithmic and within a small multiplicative factor of the working-set bound.

\paragraph{Organization.} The rest of this paper is organized in the following way. We complete the introduction by summarizing the way the working-set structure of \citet{unified} operates, since this will play a key role in our binary search tree. In Section~\ref{:main}, we describe our binary search tree and explain the way in which operations are performed. In Section~\ref{:unified}, we show how to combine our results with those of \citet{skip-splay} on the unified bound to achieve an improved worst-case search cost. We conclude with Section~\ref{:conclusion} which summarizes our results and explains possible directions for future research.

\subsection{The Working-Set Structure}
\label{:introduction:wss}

We now describe the working-set structure of \citet{unified}. The structure maintains a dynamic set under the operations \textsc{Insert}, \textsc{Delete} and \textsc{Search}. Denote by $S_i \subseteq S$ the set of keys stored in the data structure at time $i$.

The structure is composed of $t = \BigOh{\lg \lg |S_i|}$ balanced binary search trees $T_1, T_2, \ldots, T_t$ and the same number of doubly linked lists $Q_1, Q_2, \ldots, Q_t$. For any $1 \le j \le t$, the contents of $T_j$ and $Q_j$ are identical, and pointers (in both directions) are maintained between their common elements. Every element in the set $S_i$ is contained in exactly one tree and in its corresponding list. For $j < t$, the size of $T_j$ and $Q_j$ is $2^{2^j}$, whereas the size of $T_t$ and $Q_t$ is $|S_i| - \sum_{j=1}^{t-1} 2^{2^j} \leq 2^{2^t}$. Figure~\ref{figure:ws-structure} shows a schematic of the structure.

\begin{figure}
   \begin{center}
      \includegraphics[width=0.6\columnwidth]{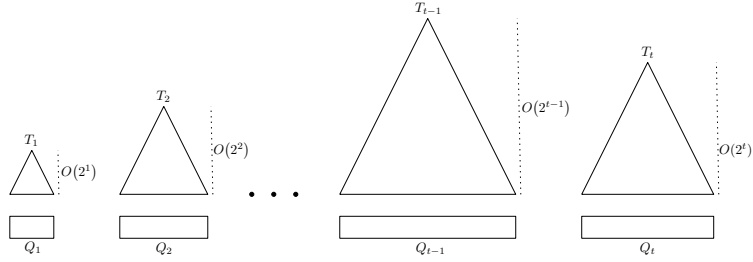}
   \end{center}
   \caption{The working-set structure of \citet{unified}. The pointers between corresponding elements in $T_j$ and $Q_j$ are not shown.}
   \label{figure:ws-structure}
\end{figure}

The working-set structure achieves its stated query time of $\BigOh{\lg w_i(x)}$ by ensuring that an element $x$ with working-set number $w_i(x)$ is stored in a tree $T_j$ with $j \leq \lceil \lg \lg w_i(x) \rceil$. Every list $Q_j$ orders the elements of $T_j$ by the time of their last access, starting with the youngest (most recently accessed) and ending with the oldest (least recently accessed).

Operations in the working-set structure are facilitated by an operation called a \emph{shift}. A shift is performed between two trees $T_j$ and $T_k$. Assume $j < k$, since the other case is symmetric. To perform a shift, we begin at $T_j$. We look in $Q_j$ to determine the oldest element and remove it from $Q_j$ and delete it from $T_j$. We then insert it into $T_{j+1}$ and $Q_{j+1}$ (as the youngest element) and repeat the process by shifting from $j+1$ to $k$. This process continues until we attempt to shift from one tree to itself. Observe that a shift causes the size of $T_j$ to decrease by one and the size of $T_k$ to increase by one. All of the trees between $T_j$ and $T_k$ will end up with the same size, but the elements contained in them change, since the oldest element from the previous tree is always added as the youngest element of the next tree.

We are now ready to describe how to make queries in the working-set structure. To search for an element $x$, we search sequentially in $T_1,T_2,\ldots$ until we find $x$ or search all of the trees and fail to find $x$. If $x \notin T_j$ for any $j$, then we will search every tree at a total cost of $\BigOh{\lg |S_i|}$ and then report that $x$ is not in the structure. Otherwise, assume $x \in T_j$. We delete $x$ from $T_j$ and $Q_j$ and insert it in $T_1$ and place it at the front of $Q_1$. We now have that the size of $T_1$ and $Q_1$ has increased by one and the size of $T_j$ and $Q_j$ has decreased by one. We therefore perform a shift from $1$ to $j$ to restore the sizes of the trees and lists. The time required for a search is dominated by the search time in $T_j$. Observe that if $x \in T_j$ and $j > 1$, then it must have been removed as the oldest element from $Q_{j-1}$, at which point at least $2^{2^{j-1}}$ distinct queries had been made. Therefore, $w_i(x) \ge 2^{2^{j-1}}$ and so the search time is $\BigOh{\lg 2^{2^j}} = \BigOh{\lg 2^{2^{j-1}}} = \BigOh{\lg w_i(x)}$.

Insertions are performed by inserting the element into $T_1$ and $Q_1$ (as the youngest element). Again, this causes $T_1$ and $Q_1$ to be too large. Since no other tree has space for one more element, we must shift to the last tree $T_t$. Thus, a shift from $1$ to $t$ is performed at total cost $\BigOh{\lg |S_i|}$. Note that it is possible that a new tree may need to be created if the size of $T_t$ grows past $2^{2^t}$. Deletions are performed by first searching for the element to be deleted. Once found, say in $T_j$, it is removed from $T_j$ and $Q_j$. To restore these sizes, we perform a shift from $t$ to $j$ at total cost $\BigOh{\lg |S_i|}$. If the last tree becomes empty, it can be removed.

\section{The Binary Search Tree}
\label{:main}

In this section, we describe a binary search tree that has the working-set property in the worst case.

\subsection{Model}
\label{:main:model}

Recall the binary search tree model of \citet{wilber-lowerbound}. Each node of the tree stores the key associated with it and has a pointer to its left and right children and its parent. The keys stored in the tree are from a totally ordered universe and are stored such that at any node, all of the keys in the left subtree are less than that stored in the node and all of the keys in the right subtree are greater than that stored at the node. Furthermore, each node may keep a constant\footnote{By standard convention, $\BigOh{\lg |S_i|}$ bits are considered to be ``constant.''} amount of additional information called \emph{fields}, but no additional pointers may be stored.

To perform an access to a key, we are given a pointer initialized to the root of the tree. An access consists of moving this pointer from a node to one of its adjacent nodes (through the parent pointer or one of the children pointers) until the pointer reaches the desired key. Along the way, we are allowed to update the fields and pointers in any nodes that the pointer reached. The access cost is the number of nodes reached by the pointer.

\subsection{Tree Decomposition}
\label{:main:decomposition}

Our binary search tree will adapt the working-set structure described in the previous section to the binary search tree model. Let $T$ denote the binary search tree as a whole. At a high level, our binary search tree layers the trees $T_1,T_2,\ldots,T_t$ of the working-set structure together to form $T$, and then augments nodes with enough information to recover which is the oldest in each tree at any given time.

Consider a labelling of $T$ where each node $x \in T$ has a label from $\{ 1,2,\ldots,t \}$ such that no node has an ancestor with a label greater than its own label. This labelling partitions the nodes of $T$. We say that the nodes with label $j \in \{ 1,2,\ldots,t \}$ form a \emph{layer} $L_j$. A layer $L_j$ will play the same role as $T_j$ in the working-set structure. Like $T_j$, $L_j$ contains exactly $2^{2^j}$ elements for $j < t$, and $L_t$ contains the remaining elements. Unlike $T_j$, $L_j$ is typically a collection of subtrees of $T$. We refer to a subtree of a layer $L_j$ as a \emph{layer-subtree}. Figure~\ref{figure:layers} shows this decomposition. Every node $x \in T$ stores as a field the value $j$ such that $x \in L_j$ which we denote by $\layer{x}$. We also record the total number of layers $t$ and the size of $L_t$ at the root as fields of each node.

\begin{figure}
   \begin{center}
      \includegraphics[width=0.2\columnwidth]{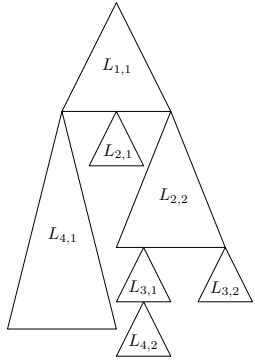}
   \end{center}
   \caption{The decomposition of the tree $T$ into layers. Here, the layer-subtrees of $L_j$ are denoted $L_{j,1},L_{j,2},\ldots$. Observe that layer $L_j$ can be connected to any layer $L_k$ with $k > j$. In this case, all of the elements of the layer-subtree $L_{4,1}$ are less than the elements in $L_{1,1}$, and so the layer-subtree $L_{4,1}$ must be connected to a leaf of $L_{1,1}$.}
   \label{figure:layers}
\end{figure}

Each layer-subtree $T_j' \in L_j$ is maintained independently as a tree that guarantees that each node of $T_j'$ has depth in $T_j'$ at most $\BigOh{\lg |T_j'|} = \BigOh{\lg |L_j|}$. This can be done using, \textit{e.g.}, a red-black tree \cite{symmetric-binary-btrees,redblack-trees}. By ``independently'', we mean that balance criteria are applied only to the elements within one layer-subtree. 

Our first observation concerns the depth of a node in a given layer.

\begin{lemma}
\label{lemma:depth}
The depth of a node $x \in L_j$ is $\BigOh{2^j}$.
\end{lemma}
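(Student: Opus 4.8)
The plan is to bound the depth of $x$ by walking down the root-to-$x$ path and charging its length to the layers it visits, then summing a geometric series. The crucial structural fact I will use is the labelling condition: since no node has an ancestor with a larger label, the labels encountered along the path from the root to $x$ are non-decreasing. Consequently, if $x \in L_j$, the path only ever visits layers $L_k$ with $k \le j$, and for each fixed label $k$ the set of path-nodes carrying that label forms a single contiguous sub-path.

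First I would argue that each such contiguous sub-path lies entirely within one layer-subtree of $L_k$. The nodes of this sub-path all belong to $L_k$ and are connected in $T$, so they lie in a single connected component of $L_k$, that is, a single layer-subtree $T_k'$. Moreover, the path enters $T_k'$ through its topmost node (its root), because $T_k'$ is a connected subtree and any downward path reaching it from an ancestor must pass through that root; the parent of that root carries a strictly smaller label by the labelling condition. Hence the sub-path is a downward path inside $T_k'$ that starts at the root of $T_k'$, and its length is at most the height of $T_k'$.

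Next I would invoke the balance guarantee: each layer-subtree $T_k'$ is maintained so that its height is $\BigOh{\lg |T_k'|} = \BigOh{\lg |L_k|}$. Since $|L_k| \le 2^{2^k}$, we have $\lg |L_k| = \BigOh{2^k}$, so the sub-path contributed by label $k$ has length $\BigOh{2^k}$. Summing over the at most $j$ distinct labels gives total depth $\sum_{k=1}^{j} \BigOh{2^k} = \BigOh{2^j}$, where the final step is the standard geometric bound $\sum_{k=1}^{j} 2^k = 2^{j+1}-2$.

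I expect the main obstacle to be the structural bookkeeping in the first step rather than any calculation: one must verify carefully that the path enters each layer-subtree exactly once and at its root, so that the per-subtree height bound (which is stated relative to the layer-subtree's own root) genuinely applies to the corresponding portion of the global root-to-$x$ path. Once the path is cleanly partitioned into one downward segment per label, the remaining estimate is immediate from the sizes $|L_k| = 2^{2^k}$ and the geometric sum.
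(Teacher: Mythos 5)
Your proof is correct and follows essentially the same route as the paper's: decompose the root-to-$x$ path into one segment per layer traversed, bound each segment by the $\BigOh{\lg |L_k|} = \BigOh{2^k}$ balance guarantee of the layer-subtree it passes through, and sum the geometric series. The paper states this more tersely; your added care about non-decreasing labels and entering each layer-subtree at its root just makes explicit what the paper leaves implicit.
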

\begin{proof}
In the worst case, we must traverse a layer-subtree of each of $L_1,L_2,\ldots,L_{j-1}$ to reach $L_j$ and then locate $x$ in $L_j$. Each layer $L_k$ has size $2^{2^k}$ and thus each layer-subtree we pass through has size at most $2^{2^k}$. Since each layer-subtree guarantees depth logarithmic in the size of the layer-subtree and thus the layer, the total depth is $\sum_{k=1}^{j} \BigOh{2^k} = \BigOh{2^j}$.
\end{proof}

The main obstacle in creating our tree comes from the fact that the core operations are performed on subtrees rather than trees, as is the case for the working-set structure. Consequently, standard red-black tree operations can not be used for the operations spanning more than one layer as described in Section~\ref{:main:inter-layer}. We break the operations into those restricted to one layer, those spanning two neighbouring layers, and finally those performed on the tree as a whole. These operations are described in the following sections.

Another difficulty arises from the having to implement the queues of the working-set structure in the binary search tree model. The queues are needed in order to determine the oldest element in a layer at any given time.

We encode the linked lists in our tree as follows. Each node $x \in L_j$ stores the key of the node inserted into $L_j$ directly before and after it. This information is stored in the fields $\older{x}$ and $\younger{x}$, respectively. We also store a key value in the field $\nextlayer{x}$. If $x$ is the oldest element in layer $L_j$, then no element was inserted before it and so we set $\older{x} = \nil$. In this case, we use $\nextlayer{x}$ to store the key of the oldest element in layer $L_{j+1}$. Similarly, if $x$ is the youngest element in layer $L_j$, then no element was inserted after it and so we set $\younger{x} = \nil$ and use $\nextlayer{x}$ to store the key of the youngest element in layer $L_{j+1}$. If $x$ is neither the youngest nor the oldest element in $L_j$, then we have $\nextlayer{x} = \nil$.

Before we describe how operations are performed on this binary search tree, we must make a brief note on storage. By the above description, each node $x$ stores three pointers (parent and children) and a key, as per the usual binary search tree model. The root also maintains the number of trees $t$ and the size of $L_t$. In addition, we must store balance information (one bit for red-black trees) and three additional key values (exactly one of which is $\nil$): $\older{x}$, $\younger{x}$ and $\nextlayer{x}$. If keys are assumed to be of size $\BigOh{\lg n}$, then it is clear our binary search tree fits the model of Section~\ref{:main:model}. Note that we are storing \emph{key values}, not pointers. Given a key value stored at a node, we do not have a pointer to it, so we must instead search for it  by traversing to the root and performing a standard search in a binary search tree. If keys have size $\LittleOmega{\lg n}$, it is true that we use more than $\BigOh{\lg n}$ additional space per node. However, since any node would then store a key of size $\LittleOmega{\lg n}$, we are only increasing the size of a node by a constant factor.

\subsection{Intra-Layer Operations}
\label{:main:intra-layer}

The operations we perform within a single layer are essentially the same as those we perform on any balanced binary search tree. We need notions of restoring balance after insertions and deletions and of splitting and joining. As mentioned before, we are not necessarily restricting ourselves to using any particular implementation of layer-subtrees. Instead, we will state the intra-layer operations and the required time bounds, and then show how red-black trees \cite{symmetric-binary-btrees,redblack-trees} can be used to fulfill this role. Other binary search trees that meet the requirements of each operation could also be used. Layer-subtrees must also ensure that their operations do not leave the layer-subtree; this can be done by checking the layer number of a node before visiting it.

Intra-layer operations rearrange layer-subtrees in some way. Observe that layer-subtrees hanging off a given node are maintained even after rearranging the layer-subtree, since the roots of such layer-subtrees can be viewed as the results of unsuccessful searches. Therefore, when describing these operations, we need not concern ourselves with explicitly maintaining layer-subtrees below the current one.

In our binary tree $T$, for each node $x$ in a layer-subtree $T_j'$ of $L_j$, we define the following operations. They are straightforward, but mentioned here for completeness and as a basis for the operations performed between layers.

\paragraph{$\InsertFixUp{x}$} This operation is responsible for ensuring that each node of $T_j'$ has depth $\BigOh{\lg |T_j'|}$ after the node $x$ has been inserted into the layer-subtree. For red-black trees, this operation is precisely the \textsc{RB-Insert-Fixup} operation presented by \citet[Section 13.3]{clrs}. Although the version presented there does not handle colouring $x$, it is straightforward to modify it to do so.

\paragraph{$\DeleteFixUp{x}$} This operation is responsible for ensuring that each node of $T_j'$ has depth $\BigOh{\lg |T_j'|}$ after a deletion in the layer-subtree. The exact node $x$ given to the operation is implementation dependent. For red-black trees, this operation is precisely the \textsc{RB-Delete-Fixup} operation presented by \citet[Section 13.4]{clrs}. In this case, the node $x$ is the child of the node spliced out by the deletion algorithm; we will elaborate on this when describing the layer operations in Section~\ref{:main:inter-layer}.

\paragraph{$\Split{x}$} This operation will cause the node $x \in T_j'$ to be moved to the root of $T_j'$. The rest of the layer-subtree will be split between the left and right side of $x$ such that each side is independently balanced and thus guarantee depth $\BigOh{\lg |T_j'|}$ of their respective nodes; this may mean that the layer-subtree is no longer balanced as a whole. For red-black trees, this operation is described by \citet[Chapter 4]{data-structures}, except we do not destroy the original trees, but rather stop when $x$ is the root of the layer-subtree.

\paragraph{$\Join{x}$} This operation is the inverse of $\Split{x}$: given a node $x \in T_j'$, we will restructure $T_j'$ to consist of $x$ at the root of the $T_j'$ and the remaining elements in subtrees rooted at the children of $x$ such that all nodes in the layer-subtree have depth $\BigOh{\lg |T_j'|}$. For red-black trees, this operation is described by \citet[Problem 13-2]{clrs}.

\begin{lemma}
\label{lemma:layer-subtree}
The operations $\InsertFixUp{x}$, $\DeleteFixUp{x}$, $\Split{x}$ and $\Join{x}$ on a node $x \in L_j$ can be implemented to take worst-case time $\BigOh{2^j}$ when red-black trees are used as layer-subtrees.
\end{lemma}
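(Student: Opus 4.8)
The plan is to reduce each of the four operations to its standard red-black tree counterpart and then bound the size of the layer-subtree on which it acts. The key numerical fact is that a layer-subtree $T_j'$ is contained in $L_j$, and every layer satisfies $|L_j| \le 2^{2^j}$: this holds with equality for $j < t$ by definition, and for $j = t$ because $L_t$ holds at most $2^{2^t}$ elements. Consequently $\lg |T_j'| \le \lg |L_j| \le \lg 2^{2^j} = \BigOh{2^j}$. Since every layer-subtree is maintained as an independent red-black tree, the cost of each operation is governed by the black-height of $T_j'$, which is $\BigOh{\lg |T_j'|} = \BigOh{2^j}$, so once the reductions are in place the bound follows by substitution.

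First I would dispatch the two fix-up operations, which are the routine part. The operation $\InsertFixUp{x}$ is the standard \textsc{RB-Insert-Fixup}, which walks up the tree performing $\BigOh{1}$ recolourings or rotations per level and terminates after $\BigOh{\lg |T_j'|}$ levels; the same holds for $\DeleteFixUp{x}$ via \textsc{RB-Delete-Fixup}. For $\Split{x}$ I would invoke the red-black split of \citet[Chapter 4]{data-structures}, which isolates the path from $x$ to the root of $T_j'$ and reassembles the two sides by a sequence of joins whose black-height differences sum to $\BigOh{\lg |T_j'|}$; $\Join{x}$ is the inverse and has the same cost by \citet[Problem 13-2]{clrs}. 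In every case, substituting the size bound $\lg |T_j'| = \BigOh{2^j}$ yields the claimed worst-case time.

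The step that requires the most care — and the main obstacle — is verifying that confining each operation to the single layer-subtree $T_j'$ neither violates the red-black invariants nor inflates the running time. Here I would lean on the observation already recorded in Section~\ref{:main:intra-layer}: any layer-subtree of a deeper layer $L_k$ with $k > j$ hangs off a node of $T_j'$ exactly where an unsuccessful search would terminate, so it can be treated as an opaque external leaf during rotations and rebalancing. Because a rotation relocates such a hanging subtree together with its parent pointer, the red-black structure of $T_j'$ is computed as if these boundaries were $\nil$ leaves, and the invariants are maintained on $T_j'$ alone. Detecting a boundary costs only an $\BigOh{1}$ comparison of $\layer{\cdot}$ values before each pointer is followed, so the number of nodes the operation touches remains proportional to the depth within $T_j'$, which is $\BigOh{2^j}$. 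Combining this confinement argument with the per-operation bounds above gives the lemma.
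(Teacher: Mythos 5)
Your proposal is correct and follows the same route as the paper, whose proof is simply the one-line citation ``Immediate from the operations given by \citet{clrs} and \citet{data-structures}'': you reduce each operation to its standard red-black counterpart and substitute the size bound $|T_j'| \le |L_j| \le 2^{2^j}$. The confinement argument you spell out at the end is handled by the paper in the prose of Section~\ref{:main:intra-layer} rather than inside the proof, so you have merely made explicit what the paper leaves implicit.
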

\begin{proof}
Immediate from the operations given by \citet{clrs} and \citet{data-structures}.
\end{proof}

\subsection{Inter-Layer Operations}
\label{:main:inter-layer}

The operations performed on layers correspond to the queue and shift operations of the working-set structure. The four operations performed on layers are $\YoungestInLayer{L_j}$ and $\OldestInLayer{L_j}$ for a layer $L_j$ and $\MoveUp{x}$ and $\MoveDown{x}$ for a node $x$.

As we did with the intra-layer operations, we will describe the requirements of the operations independently of the actual layer-subtree implementation. In fact, only the operation $\MoveDown{x}$ will require knowledge of the implementation of the layer-subtrees; the remaining operations simply make use of the operations defined in Section~\ref{:main:intra-layer}.

\paragraph{$\YoungestInLayer{L_j}$} This operation returns the key of the youngest node in layer $L_j$. We first examine all elements in $L_1$ (of which there are $\BigOh{1}$). Once we find the element that is the youngest (by looking for the element for which $\younger{x} = \nil$), say $x_1$, we go back to the root and search for $\nextlayer{x_1}$, which will bring us to the youngest element in $L_2$, say $x_2$. We then go back to the root and search for $\nextlayer{x_2}$, and so on. This repeats until we find the youngest element in $L_j$, as desired. The process for $\OldestInLayer{L_j}$ is the same, except our initial search in $L_1$ is for the oldest element, \textit{i.e.}, the element for which $\older{x} = \nil$.

\paragraph{$\MoveUp{x}$} This operation will move $x$ from its current layer $L_j$ to the next higher layer $L_{j-1}$. To accomplish this, we first split $x$ to the root of its layer-subtree using $\Split{x}$. We remove $x$ from $L_j$ by setting $\layer{x} = j - 1$. We now must restore balance properties. Observe that, by the definition of split, both of the layer-subtrees rooted at the children of $x$ are balanced. Therefore, we only need to ensure the balance properties $L_{j-1}$. Since we have just inserted $x$ into the layer $L_{j-1}$, this can be done by performing the intra-layer operation $\InsertFixUp{x}$. Finally, we must remove $x$ from the implicit queue structure of $L_j$ and place it in the implicit queue structure of $L_{j-1}$.

To do this, we look at both $\older{x}$ and $\younger{x}$. If they are both non-$\nil$, then we go to the root and perform searches for $\older{x}$ and $\younger{x}$, setting $\younger{\older{x}} = \younger{x}$ and $\older{\younger{x}} = \older{x}$. Otherwise, if only $\younger{x}$ is $\nil$, then we conclude that $x$ is the youngest in its former layer. After removing it from that layer, $\older{x}$ will be the new youngest element in that layer, so we go to the root search for $\older{x}$ and set $\younger{\older{x}} = \nil$. Since $\older{x}$ is the youngest element in that layer, we also copy $\nextlayer{x}$ into $\nextlayer{\older{x}}$. We must also update the key stored by the youngest element in the next higher layer. In order to do this, we run $\YoungestInLayer{L_{j-1}}$ to find this element, say $y$, and set $\nextlayer{y} = \older{x}$. The case for when only $\older{x}$ is $\nil$ is symmetric: the new oldest element in the layer is $\younger{x}$, so we update $\older{\younger{x}} = \nil$, we copy $\nextlayer{x}$ into $\nextlayer{\younger{x}}$, and update the pointer to the oldest element in this layer that is stored in $L_{j-1}$ in the same was as we did for the youngest.

We now must insert $x$ into the implicit queue structure of layer $L_{j-1}$. To do this, we search for the youngest node in $L_{j-1}$, say $y$. We then set $\older{x} = y$, $\younger{x} = \nil$ and $\younger{y} = x$. We then go to the next layer $L_{j-2}$ and update its pointer to the youngest element in this layer the same way we did before.

\paragraph{$\MoveDown{x}$} This operation will move $x$ from its current layer $L_j$ to the next lower layer $L_{j+1}$. We describe how to perform this operation for red-black trees; other implementations of the layer-subtrees will need to define different implementations but must respect the stated worst-case time bound of $\BigOh{2^j}$. Let $p$ denote the predecessor of $x$ in $L_j$. If $x$ does not have a predecessor in $L_j$, set $p = x$. Similarly, let $s$ denote the successor of $x$ in $L_j$, and if $x$ does not have a successor in $L_j$, set $s = x$. Our first goal is to move $x$ such that it becomes a leaf of its layer-subtree. If $x$ is not already a leaf in $L_j$, then $x$ has at least one child in its layer-subtree. To make it a leaf of it layer-subtree, we \emph{splice} out the node $s$ by making the parent of $s$ point to the right child of $s$ instead of $s$ itself. Note that this is well-defined since $s$ has no left child in $L_j$ as it is the smallest element greater than $x$. We then move $s$ to the location of $x$. Finally, we make $x$ a child of $p$ and make the new children of $x$ the old children of $p$ and $s$. Figure~\ref{figure:movedown} explains this process.

\begin{figure}
   \begin{center}
      \includegraphics[width=0.7\columnwidth]{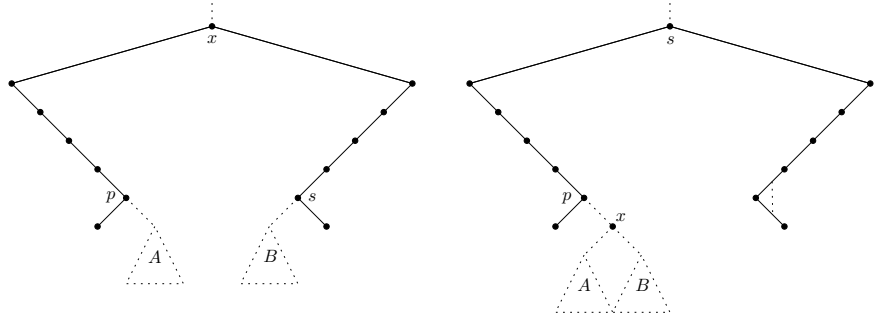}
   \end{center}
   \caption{The first part of the $\MoveDown{x}$ operation. On the left is the initial layer-subtree and the on the right is the layer-subtree after the nodes have been moved and layers changed but before the $\DeleteFixUp{s'}$. The dotted lines to nodes and subtrees indicate layer boundaries and the dotted line over the old node $s$ indicates a splice.}
   \label{figure:movedown}
\end{figure}

Observe that we now have that $x$ is a leaf of its layer-subtree. The layer-subtree is configured exactly as if we had deleted $x$ using the deletion operation described by \citet[Section 13.4]{clrs}. Therefore, we can perform $\DeleteFixUp{s'}$, where $s'$ is the (only) child of $s$, to restore the balance properties of the nodes of the layer-subtree. Thus, $s'$ is exactly the child of the node spliced out by the deletion ($s$), as required by the operation of \citet[Section 13.4]{clrs}.

To complete the movement to the next layer, we change the layer number of $x$ and execute $\Join{x}$ to create a single balanced layer-subtree from $x$ and its children.\footnote{Note that if these children have larger layer numbers than the new layer number for $x$, nothing is performed and $x$ becomes the lone element in its (new) layer-subtree; this follows from the fact that $\Join{x}$ only joins nodes that are in the same layer.} We then update the implicit queue structure as we did before. Observe that once $x$ has been removed from its original layer-subtree, layer-subtree balance has been restored because no node on that path was changed.

\begin{lemma}
\label{lemma:layer}
The operations $\YoungestInLayer{L_j}$ and $\OldestInLayer{L_j}$, $\MoveUp{x}$ and $\MoveDown{x}$ for a layer $L_j$ or a node $x \in L_j$ each take worst-case time $\BigOh{2^j}$.
\end{lemma}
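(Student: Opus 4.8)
The plan is to bound each of the four operations separately, in every case decomposing the operation into a constant number of sub-steps and arguing that each sub-step costs $\BigOh{2^j}$. The two preceding lemmas do most of the work: Lemma~\ref{lemma:depth} bounds the depth of any node of $L_k$ by $\BigOh{2^k}$, so every search from the root for a key residing in $L_k$ costs $\BigOh{2^k}$, and Lemma~\ref{lemma:layer-subtree} bounds each intra-layer operation on a node of $L_k$ by $\BigOh{2^k}$.

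I would begin with $\YoungestInLayer{L_j}$ and $\OldestInLayer{L_j}$, since these are reused by the other two operations. Here the algorithm performs a sequence of rounds: in round $k$ it starts from the youngest (respectively oldest) node of $L_k$, returns to the root, and searches for the key stored in that node's $\nextlayer{\cdot}$ field, which is the youngest (respectively oldest) node of $L_{k+1}$. By Lemma~\ref{lemma:depth} the ascent from a node of $L_k$ to the root and the subsequent descent into $L_{k+1}$ together cost $\BigOh{2^{k+1}}$, so the total cost is $\sum_{k=1}^{j} \BigOh{2^k}$. Since this is a geometric series it is dominated by its last term and equals $\BigOh{2^j}$. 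This collapse of the geometric series is the one genuinely quantitative step, and it is what makes the repeated trips back to the root affordable.

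Next I would handle $\MoveUp{x}$, which consists of a $\Split{x}$ in $L_j$, a relabelling of $x$, an $\InsertFixUp{x}$ in $L_{j-1}$, and a constant number of queue-maintenance steps. The split costs $\BigOh{2^j}$ and the fix-up costs $\BigOh{2^{j-1}}$ by Lemma~\ref{lemma:layer-subtree}. The queue-maintenance steps consist of a constant number of searches from the root for keys stored in the fields of $x$ (all of which lie in layers of index at most $j$, hence cost $\BigOh{2^j}$ by Lemma~\ref{lemma:depth}) together with a constant number of calls to $\YoungestInLayer{\cdot}$ on layers of index at most $j-1$, each costing $\BigOh{2^{j-1}}$ by the first part. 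As only a constant number of sub-steps is performed and each is $\BigOh{2^j}$, the operation runs in $\BigOh{2^j}$ time.

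Finally I would treat $\MoveDown{x}$, the most intricate case, following the same template. Locating $x$ together with its predecessor $p$ and successor $s$ within the layer-subtree of $L_j$ costs $\BigOh{2^j}$, since that layer-subtree has depth $\BigOh{2^j}$; the splicing and pointer rearrangement that turns $x$ into a leaf is $\BigOh{1}$; the subsequent $\DeleteFixUp{s'}$ is $\BigOh{2^j}$ and the $\Join{x}$ on the relabelled node in $L_{j+1}$ is $\BigOh{2^{j+1}}$ by Lemma~\ref{lemma:layer-subtree}; and the queue-maintenance is bounded exactly as in $\MoveUp{x}$ by a constant number of root searches and $\YoungestInLayer{\cdot}$/$\OldestInLayer{\cdot}$ calls on layers of index at most $j+2$. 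Since $2^{j+1}$ and $2^{j+2}$ are both $\BigOh{2^j}$, every sub-step is $\BigOh{2^j}$ and the total is again $\BigOh{2^j}$. The main obstacle is not any individual bound but the bookkeeping: one must verify that the queue-maintenance in $\MoveUp{x}$ and $\MoveDown{x}$ never touches a layer of index more than $j+2$ and invokes $\YoungestInLayer{\cdot}$/$\OldestInLayer{\cdot}$ only a constant number of times. Once this is pinned down, the bound follows from the two preceding lemmas together with the geometric-series argument above.
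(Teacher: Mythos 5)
Your proposal is correct and follows essentially the same route as the paper's proof: both reduce everything to Lemma~\ref{lemma:depth} for the root-to-node searches, Lemma~\ref{lemma:layer-subtree} for the intra-layer operations, the collapse of the geometric sum $\sum_{k \le j} \BigOh{2^k}$ for the $\YoungestInLayer{\cdot}$/$\OldestInLayer{\cdot}$ chain, and the observation that each move operation is a constant number of such sub-steps. Your version is somewhat more explicit about which layers the queue maintenance touches (index at most $j+2$, each still $\BigOh{2^j}$), but this is a refinement of the paper's argument rather than a different one.
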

\begin{proof}
The operations $\YoungestInLayer{L_j}$ and $\OldestInLayer{L_j}$ find the youngest (respectively oldest) element in layers $L_1,L_2,\ldots,L_j$. Given the youngest (respectively oldest) element in layer $L_k$, we can determine the youngest (respectively oldest) element in layer $L_{k+1}$ in constant time since such an element maintains the key of the youngest (respectively oldest) element in the next layer. We then need to traverse from the root to that element. By Lemma~\ref{lemma:depth}, the total time is $\sum_{k=1}^j \BigOh{2^k} = \BigOh{2^j}$.

The $\MoveUp{x}$ and $\MoveDown{x}$ operations, where $x \in L_j$, consist of searching for $x$, performing a constant number of intra-layer operations and then making series of queries for the youngest elements in several layers and updating the queue structures. The search can be done is $\BigOh{2^j}$ time by Lemma~\ref{lemma:depth} and the intra-layer operations each take $\BigOh{2^j}$ time by Lemma~\ref{lemma:layer-subtree} for a total of $BigOh{2^j}$. Finally, the queries for the youngest elements and the cost of updating the queues is dominated by the cost of the query in the deepest layer since each layer is twice the size of the previous one. Since $x \in L_j$, this cost is $\BigOh{2^j}$ by the above argument. The total cost of $\MoveUp{x}$ and $\MoveDown{x}$ is thus $\BigOh{2^j}$.
\end{proof}

\subsection{Tree Operations}
\label{:main:tree}

We are now ready to describe how to perform the operations $\Search{x}$, $\Insert{x}$ and $\Delete{x}$ on the tree as a whole. Such operations are independent of the layer-subtree implementation given the inter-layer and intra-layer operations defined in the previous sections.

\paragraph{$\Search{x}$} To perform a search for $x$, we begin by performing the usual method of searching in a binary search tree. Once we have found $x \in L_j$, we execute $\MoveUp{x}$ a total of $j-1$ times to bring $x$ into $L_1$. We then restore the sizes of the layers as was done in the working-set structure. We run $\OldestInLayer{L_1}$ to find the oldest element $y_1$ in layer $L_1$ and then run $\MoveDown{y_1}$. We then perform the same operation in $L_2$ by running $\OldestInLayer{L_2}$ to find the oldest element $y_2$ in layer $L_2$, then run $\MoveDown{y_2}$. This process of moving elements down layer-by-layer continues until we reach a layer $L_k$ such that $|L_k| < 2^{2^k}$.\footnote{Note that for an ordinary search, we have $k = j$. However, thinking of the algorithm this way gives us a clean way to describe insertions.} Note that efficiency can be improved by remembering the oldest elements of previous layers instead of finding the oldest element in each of $L_1,\ldots,L_j$ when running $\OldestInLayer{L_j}$. Such an improvement does not alter the asymptotic running time, however.

\paragraph{$\Insert{x}$} To insert $x$ into the tree, we first examine the index $t$ and size $|L_t|$ of the deepest layer, which we have stored at the root. If $|L_t| = 2^{2^t}$, then we increment $t$ and set $|L_t| = 1$. Otherwise, if $|L_t| < 2^{2^t}$, we simply increment $|L_t|$. We now insert $x$ into the tree (ignoring layers for now) using the usual algorithm where $x$ is placed in the tree as a leaf. We set $\layer{x} = t+1$ (\textit{i.e.}, a temporary layer larger than any other) and update the implicit queue structure for $L_t$ (and the youngest and oldest elements of $L_{t-1}$) as we did before. Finally, we run $\Search{x}$ to bring $x$ to $L_1$. Note that since $\Search{x}$ stops moving down elements once the first non-full layer is reached, we do not place another element in layer $t+1$. Thus, this layer is now empty and we update the youngest and oldest elements in layer $t$ to indicate that there is no layer below.

\paragraph{$\Delete{x}$} To delete $x$ from the tree, we look at the total number $t$ of layers in the tree that is stored at the root. We then locate $x \in T_j$ and perform $\MoveDown{x}$ a total of $t-j+1$ times. This will cause $x$ to be moved to a new (temporary) layer that is guaranteed to have no other nodes in it. Therefore, $x$ must be a leaf of the tree, and we can simply remove it by setting the corresponding child pointer of its parent to $\nil$. As was the case for insertion, this temporary layer is now empty and so we update the youngest and oldest elements in layer $t$ to indicate that there is no layer below. We then perform $t-j+1$ $\MoveUp{y}$ operations for the youngest element $y$ of each layer from $t$ to $j$ to restore the sizes of the layers. At this point, it could be the case that $|L_t| = 0$. If this happens, we decrement the number of layers $t$ which is stored at the root, and update the youngest and oldest elements in the new deepest layer to indicate that there is no layer below.

\begin{theorem}
\label{theorem:main}
Searching for $x$ at time $i$ takes worst-case time $\BigOh{\lg w_i(x)}$ and insertion and deletion each take worst-case time $\BigOh{\lg n}$.
\end{theorem}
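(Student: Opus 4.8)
The plan is to establish the three bounds separately, reducing each to a geometric sum over layers using the worst-case costs from Lemmas~\ref{lemma:depth} and~\ref{lemma:layer}, and then to connect the search bound to the working-set number by reusing the counting argument developed for the working-set structure.

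First I would bound the cost of $\Search{x}$ when $x \in L_j$. The operation splits into three phases: an ordinary binary-search-tree descent to locate $x$, which costs $\BigOh{2^j}$ by Lemma~\ref{lemma:depth}; a sequence of $j-1$ calls to $\MoveUp{x}$ that carry $x$ upward through layers $L_j, L_{j-1}, \ldots, L_2$, whose costs by Lemma~\ref{lemma:layer} telescope as $\sum_{k=2}^{j} \BigOh{2^k} = \BigOh{2^j}$; and the size-restoring phase, which for each layer $L_m$ with $1 \le m \le j$ performs one $\OldestInLayer{L_m}$ followed by one $\MoveDown{y}$ on an element $y$ of $L_m$, again summing to $\sum_{m=1}^{j} \BigOh{2^m} = \BigOh{2^j}$. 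Hence a search terminating in $L_j$ runs in worst-case time $\BigOh{2^j}$.

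The heart of the argument is then to replace $\BigOh{2^j}$ by $\BigOh{\lg w_i(x)}$. Here I would reuse the invariant established for the working-set structure: since the layers faithfully simulate the trees $T_1, \ldots, T_t$, any element residing in $L_j$ with $j > 1$ must at some earlier time have been moved down out of $L_{j-1}$ as its oldest element, and at that moment $L_{j-1}$ held its full complement of $2^{2^{j-1}}$ distinct recently accessed elements. Consequently $w_i(x) \ge 2^{2^{j-1}}$, so $\lg w_i(x) = \BigOmega{2^{j-1}} = \BigOmega{2^j}$, and the search cost $\BigOh{2^j}$ is $\BigOh{\lg w_i(x)}$; the case $j = 1$ is immediate since both quantities are $\BigOh{1}$. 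I expect this step --- verifying that the $\MoveUp$ and $\MoveDown$ operations preserve \emph{exactly} the age ordering and layer-size discipline of the working-set structure, so that the invariant transfers unchanged --- to be the main obstacle, as the remaining running-time computations are routine telescoping sums once the per-operation bounds are in hand.

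Finally, for $\Insert{x}$ and $\Delete{x}$ I would first note that the number of layers satisfies $2^t = \BigOh{\lg n}$: because the largest full layer has size $2^{2^{t-1}} < n$, we get $2^{t-1} < \lg n$ and hence $2^t = \BigOh{\lg n}$, so by Lemma~\ref{lemma:depth} the whole tree has depth $\BigOh{2^t} = \BigOh{\lg n}$. Both operations then decompose into an $\BigOh{\lg n}$ leaf insertion or location step plus a bounded number of inter-layer operations sweeping across layers $L_j, \ldots, L_t$ together with one temporary layer $L_{t+1}$: insertion invokes a single $\Search{x}$ whose move-down phase now cascades through every full layer, while deletion performs $t-j+1$ move-downs followed by $t-j+1$ move-ups on the youngest element of each traversed layer. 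In each case Lemma~\ref{lemma:layer} yields costs summing as $\sum_{k=1}^{t+1} \BigOh{2^k} = \BigOh{2^{t+1}} = \BigOh{\lg n}$, which completes the bound.
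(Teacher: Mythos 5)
Your proposal is correct and follows essentially the same route as the paper's proof: the same three-phase decomposition of \textsc{Search} into telescoping geometric sums via Lemmas~\ref{lemma:depth} and~\ref{lemma:layer}, the same invariant $w_i(x) \ge 2^{2^{j-1}}$ inherited from the working-set structure to convert $\BigOh{2^j}$ into $\BigOh{\lg w_i(x)}$, and the same $2^t = \BigOh{\lg n}$ bound for insertion and deletion. The only difference is that you spell out the age-ordering invariant explicitly where the paper simply appeals to ``the same analysis as that of the working-set structure'' of B\u{a}doiu et al.
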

\begin{proof}
A search consists of a regular search in a binary search tree followed by several layer operations. Suppose $x \in L_j$ at time $i$. By Lemma~\ref{lemma:depth}, we can find $x$ in time $\BigOh{2^j}$. We then perform $\MoveUp{x}$ in time $\BigOh{2^j}$ by Lemma~\ref{lemma:layer}. We then run $\textsc{OldestInLayer}$ and $\textsc{MoveDown}$ operations for every layer from $1$ to $j$. By Lemma~\ref{lemma:layer}, this has total cost $\sum_{k=1}^j \BigOh{2^k} = \BigOh{2^j}$. The total time is therefore $\BigOh{2^j}$. Observe that, by the same analysis as that of the working-set structure of \citet{unified}, we have that $w_i(x) \ge 2^{2^{j-1}}$, and so $\BigOh{2^j} = \BigOh{\lg w_i(x)}$.

An insertion consists of traversing through all layers. By Lemma~\ref{lemma:depth}, this takes time $\sum_{k=1}^t \BigOh{2^k} = \BigOh{2^t} = \BigOh{2^{\lg \lg n}} = \BigOh{\lg n}$. We then perform a search at cost $\BigOh{\lg n}$ by the above argument, since the element searched for is in the deepest layer. The total cost is thus $\BigOh{\lg n}$.

A deletion consists of traversing the tree to find $x \in L_j$ and then performing $\textsc{MoveDown}$ and $\textsc{MoveUp}$ at most once per layer. The traversal takes time $\BigOh{2^j}$ by Lemma~\ref{lemma:depth} and the $\textsc{MoveDown}$ and $\textsc{MoveUp}$ operations each cost $\BigOh{2^k}$ for $L_k$ by Lemma~\ref{lemma:layer}. The total cost is thus $\BigOh{2^j} + \sum_{k=1}^t \BigOh{2^k} = \BigOh{2^t} = \BigOh{2^{\lg \lg n}} = \BigOh{\lg n}$.
\end{proof}

\section{Skip-Splay and the Unified Bound}
\label{:unified}

In this section, we show how to use layered working-set trees in the skip-splay structure of \citet{skip-splay} in order to achieve the unified bound to within a small multiplicative factor. The unified bound \cite{unified} requires that the time to search an element $x$ at time $i$ is

\begin{displaymath}
\mathrm{UB}(x) = \BigOh{ \min_{y \in S_i} \lg (w_i(y) + d(x,y))}
\end{displaymath}

where $w_i(y)$ is the working-set number of $y$ at time $i$ (as in Section~\ref{:introduction}) and $d(x,y)$ is defined as the rank distance between $x$ and $y$. This property implies the working-set and the dynamic finger properties. Informally, the unified bound states that an access is fast if the current access is close in term of rank distance to some element that has been accessed recently. \citet{unified} introduced a data structure achieving the unified bound in the amortized sense. This structure does not fit into the binary search tree model, but the splay tree \cite{splay-trees}, which does fit into this model, is conjectured to achieve the unified bound \cite{unified}

Recently, \citet{skip-splay} developed the first binary search tree that guarantees an access time close to the unified bound. Their algorithm, called \emph{skip-splay}, performs an access to the element $x$ in $\BigOh{\mathrm{UB}(x) + \lg \lg n}$ amortized time. Insertions and deletions are not supported. In the remainder of this section, we briefly describe skip-splay and then show how to modify it using the layered working-set tree presented in Section~\ref{:main} in order to achieve a new bound in the binary search tree model.

The skip-splay algorithm works in the following way. Assume for simplicity that the tree $T$ stores the set $\{ 1,2,\ldots,n \}$ where $n = 2^{2^{k-1}} - 1$ for some integer $k \ge 0$ and that $T$ is initially perfectly balanced. Nodes of height $2^i$ (where the leaves of $T$ have height $1$) for $i \in \{0,1,\ldots,k-1\}$ are marked as the root of a subtree. Such nodes partition $T$ into a set of splay trees called \emph{auxiliary trees}. Each auxiliary tree is maintained as an independent splay tree. Observe that the $i$-th auxiliary tree encountered on a path from the root to a leaf in $T$ has size $2^{\lg_2 n/2^{i}} = n^{1/2^i}$. Define $\aux{x}$ to be the auxiliary tree containing the node $x$. 

To access an element $x$, we perform a standard binary search in $T$ to locate $x$. We then perform a series of splay operations on some of the auxiliary trees of $T$. We begin by splaying $x$ to the root of $\aux{x}$ using the usual splay algorithm. If $x$ is now the root of $T$, the operation is complete. Otherwise, we \emph{skip} to the new parent of $x$, say $y$, and splay $y$ to the root of $\aux{y}$. This process is repeated until we reach the root of $T$.

By using layered working-set trees as auxiliary trees in place of splay trees, we can get the following result.

\begin{theorem}
\label{theorem:unified1}
There exists a binary search tree that performs an access to the element $x_i$ in $\BigOh{\lg n}$ worst-case time and in $\BigOh{\mathrm{UB}(x_i) + \lg \lg n}$ amortized time.
\end{theorem}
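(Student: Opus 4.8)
The plan is to run the skip-splay scheme of \citet{skip-splay} with one change: each auxiliary tree is stored as a layered working-set tree on its fixed element set rather than as a splay tree, so that an access inside an auxiliary tree is a \textsc{Search} in that layered working-set tree instead of a splay. I would then prove the two bounds separately. For the worst-case bound, note first that---exactly as in the proof of Theorem~\ref{theorem:main}---a layered working-set tree on $m$ elements has its deepest layer at index $\BigOh{\lg\lg m}$, so by Lemma~\ref{lemma:depth} every node sits at depth $\BigOh{\lg m}$; hence each auxiliary tree has height logarithmic in its size. The $j$-th auxiliary tree on a root-to-leaf path of $T$ has size $n^{1/2^j}$, so its height is $\BigOh{\lg n / 2^j}$. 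The initial binary search crosses one auxiliary tree per level, at total cost $\sum_j \BigOh{\lg n / 2^j} = \BigOh{\lg n}$, and the skip phase performs one \textsc{Search} per auxiliary tree on the path, each costing $\BigOh{\lg n / 2^j}$ in the worst case by Theorem~\ref{theorem:main}; these again sum to $\BigOh{\lg n}$, giving a worst-case access time of $\BigOh{\lg n}$.

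For the amortized bound I would reuse the analysis of \citet{skip-splay}. That analysis charges $\BigOh{1}$ to each of the $\BigOh{\lg\lg n}$ auxiliary trees met on the path (this produces the additive $\lg\lg n$), and bounds the remaining cost, summed over these trees, by $\BigOh{\mathrm{UB}(x_i)}$ amortized. The proximity half of the unified bound is supplied entirely by the fixed band structure: an auxiliary tree at level $j$ spans a key interval of width $\BigOh{n^{1/2^{j-1}}}$, so the level whose interval matches the rank distance $d(x_i,y)$ to a witness $y$ attaining $\mathrm{UB}(x_i)$ is the one that governs the cost, and there the tree has size $\BigOh{\sqrt{d(x_i,y)}}$. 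The recency half is supplied by the working-set property \emph{internal} to each auxiliary tree: the cost of the \textsc{Search} in an auxiliary tree $A$ is controlled by the number of distinct elements of $A$ accessed since the queried element of $A$ was last touched. Splay trees furnish this property only amortized, whereas by Theorem~\ref{theorem:main} layered working-set trees furnish it in the worst case, which is at least as strong, so substituting them leaves the argument intact; the bound stays amortized because the amortization lives in the multi-scale bookkeeping that converts the internal working-set numbers into $\mathrm{UB}(x_i)$, not in any internal amortization of the trees themselves.

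Two adjustments are needed, and the second is where the work lies. A splay places the queried element at the root of its auxiliary tree, whereas a \textsc{Search} places it only in $L_1$, a set of $\BigOh{1}$ elements at depth $\BigOh{1}$; I would therefore take the skip step to walk from $L_1$ up to the current root of the auxiliary tree before following the parent pointer into the next one, which perturbs the element touched at each higher level by only $\BigOh{1}$ in rank and changes none of the bounds above. The main obstacle is the previous paragraph: one must verify that the \citet{skip-splay} potential argument genuinely isolates the internal working-set property as its sole hypothesis on the auxiliary trees, and that its potential function---now a sum of terms attached to layered working-set trees rather than splay trees---still telescopes so that the overhead of each access against the unified bound is absorbed. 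I expect the delicate point to be the matching of the critical level to $d(x_i,y)$, since it is there that the proximity information carried by the band structure must be combined with the recency information guaranteed by Theorem~\ref{theorem:main}.
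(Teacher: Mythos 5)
Your proposal takes essentially the same route as the paper: replace the splay-tree auxiliary trees of skip-splay with layered working-set trees (so a splay becomes a \textsc{Search}), obtain the worst-case $\BigOh{\lg n}$ bound by summing the logarithmic heights $\BigOh{\lg n/2^j}$ of the auxiliary trees along the access path, and inherit the amortized $\BigOh{\mathrm{UB}(x_i)+\lg\lg n}$ bound from the skip-splay analysis, which only needs the auxiliary trees to have the working-set property. The paper's proof is in fact terser than yours---it simply cites the skip-splay authors' remark that any working-set-property structure suffices and does not address the $L_1$-versus-root discrepancy you correctly flag (and resolve at $\BigOh{1}$ extra cost)---so your additional care is welcome but does not constitute a different argument.
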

\begin{proof}
As suggested by \citet{skip-splay}, instead of using splay trees to maintain the auxiliary trees, we could use any data structure that satisfies the working-set property. Thus, by maintaining the auxiliary trees as layered working-set, we straightforwardly guarantee an amortized time of $\BigOh{\mathrm{UB}(x_i) + \lg \lg n}$ to search for an element $x_i$. Note that the splay in the auxiliary tree corresponds to the $\Search{x}$ operation in our structure.

Now we show that this modified version of the skip-splay has the additional property that the worst case search time is $\BigOh{\lg n}$. A search consists of traversing a maximum of $k$ auxiliary trees where the size of the $i$-th encountered auxiliary tree is $n^{1/2^i}$. In the worst case, the amount of work performed in an auxiliary tree $A$ is $\BigOh{\lg |A|}$. Since the auxiliary trees are maintained independently from each other, the total worst-case search cost in the tree $T$ is $\BigOh{\sum^k_{i=1} \lg n/2^i} = \BigOh{\lg n}$.
\end{proof}

By doubling the access to an element, we also obtain the following result.

\begin{theorem}
\label{theorem:unified2}
The binary search tree described in Theorem~\ref{theorem:unified1} performs an access to the element $x_i$ in worst-case time $\BigOh{\lg \lg n \lg w_i(x_i)}$.
\end{theorem}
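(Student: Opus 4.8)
The plan is to charge the worst-case cost of an access to $x_i$ to the individual auxiliary trees that skip-splay visits, and to show that the work performed in each of them is $\BigOh{\lg w_i(x_i)}$. Recall that an access to $x_i$ touches only the auxiliary trees lying on the root-to-$x_i$ path, of which there are at most $k = \BigOh{\lg \lg n}$, and that each such tree $A$ is now a layered working-set tree on which skip-splay performs a single $\Search$ for the element $z_A$ it looks up there (with $z_A = x_i$ in the bottom tree $\aux{x_i}$, and the successive skip-targets in the trees above). By Theorem~\ref{theorem:main}, this $\Search$ costs $\BigOh{\lg w^A(z_A)}$ in the worst case, where $w^A(z_A)$ is the working-set number of $z_A$ relative to the sequence of look-ups that skip-splay has issued inside $A$. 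The total worst-case cost is therefore $\sum_A \BigOh{\lg w^A(z_A)}$, so the theorem reduces to the single estimate $\lg w^A(z_A) = \BigOh{\lg w_i(x_i)}$ for every auxiliary tree $A$ on the path.

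The crux --- and what I expect to be the main obstacle --- is establishing this per-tree estimate. For the bottom tree the target is $x_i$ itself, and since $x_i$ is looked up in $\aux{x_i}$ on every access to $x_i$, the window since its last local look-up is contained in the window since $x_i$ was last accessed globally; one would like to conclude that the number of \emph{distinct} elements looked up inside $A$ in that window is at most the number of distinct global accesses, namely $w_i(x_i)$. The trouble is that $z_A$ is not a fixed element in the upper trees: it is whichever node currently occupies the attachment point of the lower auxiliary tree on $x_i$'s path, and the restructuring that $\Search$ performs (rotations within layer-subtrees, which drag the hanging lower auxiliary trees along) can slide a different, possibly cold, node into that position between two accesses to $x_i$. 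Consequently a few distinct global elements accessed very often can churn $A$ and expose many distinct skip-targets, so a naive count does \emph{not} bound $w^A(z_A)$ by $w_i(x_i)$. This is precisely where doubling enters: by performing each access twice I would refresh the target of every auxiliary tree on the accessed element's path to the front layer $L_1$ of that tree and re-stabilise it, and then argue, as an invariant, that between two consecutive accesses to $x_i$ the attachment node on $x_i$'s path is disturbed only by the distinct intervening global accesses that actually pass through $A$. With this invariant in hand, the number of distinct look-ups inside $A$ since $z_A$ was last looked up is $\BigOh{w_i(x_i)}$, giving $\lg w^A(z_A) = \BigOh{\lg w_i(x_i)}$.

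Finally I would assemble the bound: summing $\BigOh{\lg w_i(x_i)}$ over the $\BigOh{\lg \lg n}$ auxiliary trees on the path yields the claimed worst-case access time $\BigOh{\lg \lg n \, \lg w_i(x_i)}$, and since doubling inflates every cost by only a constant factor, the worst-case $\BigOh{\lg n}$ and amortized $\BigOh{\mathrm{UB}(x_i) + \lg \lg n}$ guarantees of Theorem~\ref{theorem:unified1} survive intact. I expect the delicate part to be the invariant of the second paragraph: making precise the sense in which doubling keeps each skip-target fresh, and verifying that the churn a handful of heavily repeated accesses can induce in an upper auxiliary tree is governed by the \emph{number of distinct} such accesses rather than by their total count.
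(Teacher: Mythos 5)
Your decomposition of the cost into the $k \le \lg\lg n$ local searches, and your reduction of the theorem to the single estimate ``the local working-set number of the skip-target is $\BigOh{w_i(x_i)}$,'' match the paper's proof, and you have correctly located the difficulty: in an upper auxiliary tree the skip-target is whatever node currently sits as the parent of the root of the next auxiliary tree, and restructuring can change that node between accesses to $x_i$. But your proposal stops exactly where the paper supplies its one essential idea, and the ``invariant'' you sketch in its place does not close the gap. The missing observation is that, by the search-tree property, the parent of the root of $A_{j+1}$ is at all times one of exactly \emph{two fixed keys}: the predecessor $p(A_{j+1})$ and the successor $s(A_{j+1})$ of $A_{j+1}$'s key interval among the elements of $A_j$. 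Hence the ``many distinct skip-targets'' you fear a heavily repeated global element could expose cannot arise---for each fixed lower auxiliary tree there are only two candidates. This yields the paper's identity $d_i(A_j,A_{j+1}) = \max\{d_i(A_j,p(A_{j+1})),\, d_i(A_j,s(A_{j+1}))\}+1$ and also shows that each distinct global access induces at most two distinct local accesses in $A_j$, so the relevant local working-set numbers are $\BigOh{w_i(x_i)}$.

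The same observation explains what doubling is actually for, which is not your ``refresh the current target to $L_1$'' plan: refreshing whichever of $p(A_{j+1})$ and $s(A_{j+1})$ happens to be the attachment point now does not help if the attachment point later flips to the \emph{other} candidate, which may never have been locally accessed and hence may be arbitrarily deep in $A_j$. In the paper, the doubled access is used to access \emph{both} $p(A_{j+1})$ and $s(A_{j+1})$ in every traversed auxiliary tree, so that at any later access to $x_i$ both candidates have local working-set number $\BigOh{w_i(x_i)}$ and therefore, by Theorem~\ref{theorem:main}, depth $\BigOh{\lg w_i(x_i)}$ in $A_j$; summing over the $\BigOh{\lg\lg n}$ trees gives the bound. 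Without identifying the two candidates and covering both of them, your invariant (``the attachment node is disturbed only by the distinct intervening global accesses'') is not one you can establish, and the theorem does not follow from your argument as written.
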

\begin{proof}
Doubling the access to an element increases by at most twice its worst-case access time. Thus, the asymptotic performance of the structure still holds for both the worst-case access time and amortized access time.

In order to reach an element in the tree, we have to traverse several auxiliary trees. Let $A_1,A_2,\ldots,A_k$ be the ordered sequence of trees traversed during an access to the element $x_i$ (note that $k \leq \lg \lg n$). The number of accesses performed independently in each of those trees is bounded above by $w_i(x_i)$.

For $j=1,2,\ldots,k-1$, define $d_i(A_j,A_{j+1})$ to be the distance between the root node of $A_j$ and the root node of $A_{j+1}$ in the structure at time $i$. More generally, define $d_i(A_j,y)$ as the distance between the root node of $A_j$ and the element $y$ where $y$ is a descendent of the root of $A_j$. Let $p(A_j)$ (and $s(A_j)$) be the greatest (smallest) element of $A_{j-1}$ that is smaller (greater) than any element in $A_j$. Thus the cost of accessing $x_i$ is $ \sum_{j=1}^{k-1} d_i(A_j,A_{j+1}) + d_i(A_k,x_i)$.

By the definition of a search tree we know that the parent of the root node of $A_j$ is either $p(A_j)$ or $s(A_j)$. Thus

\begin{equation}
\label{equation:depth}
d_i(A_j,A_{j+1})= \max \{ d_i(A_j,p(A_{j+1})), d_i(A_j,s(A_{j+1}))\} + 1.
\end{equation}

When we access $x_i$ twice, we independently access both $p(A_j)$ and $s(A_j)$ in each traversed auxiliary tree $A_j$. By Theorem \ref{theorem:main}, we have

\begin{equation*}
\left.
\begin{array}{r}
d_i(A_j,p(A_{j+1}))\\
d_i(A_j,s(A_{j+1}))
\end{array} \right\} = \BigOh{\lg w_i(x_i))} \quad {\rm for} \quad j=1,2,\ldots, k-1.
\end{equation*}

And we also have $d_i(A_k,x_i)= \BigOh{\lg w_i(x_i)}$. Hence, by applying equation (\ref{equation:depth}), the result follows.
\end{proof}

Note that this last property is not satisfied by the original unified structure \cite{unified}. Theorems \ref{theorem:unified1} and \ref{theorem:unified2} thus show

\begin{corollary}
\label{corollary:unified-main}
There exists a binary search tree that performs an access to the element $x_i$ in worst-case time $\BigOh{\min \{\lg n, (\lg \lg n)\lg w_i(x_i)\}}$ and in $\BigOh{\mathrm{UB}(x_i) + \lg \lg n}$ amortized time.
\end{corollary}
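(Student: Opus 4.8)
The plan is to observe that Theorems~\ref{theorem:unified1} and~\ref{theorem:unified2} describe one and the same binary search tree together with a single access procedure, so that the two worst-case bounds they establish hold simultaneously for each individual access. First I would fix the structure under consideration to be the doubled-access version of the skip-splay tree with layered working-set auxiliary trees, since this is the object to which Theorem~\ref{theorem:unified2} refers. I would then note that accessing each element twice at most doubles the cost of an access, so the worst-case bound $\BigOh{\lg n}$ and the amortized bound $\BigOh{\mathrm{UB}(x_i) + \lg \lg n}$ of Theorem~\ref{theorem:unified1} survive the passage to the doubled-access structure up to constant factors.

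The key step is then to combine the two worst-case guarantees, and it is essentially a triviality. For any single access, let $T$ denote its actual running time. Theorem~\ref{theorem:unified1} gives $T = \BigOh{\lg n}$ and Theorem~\ref{theorem:unified2} gives $T = \BigOh{(\lg \lg n)\lg w_i(x_i)}$; because both inequalities bound the cost of the \emph{same} execution, we immediately obtain $T = \BigOh{\min \{\lg n, (\lg \lg n)\lg w_i(x_i)\}}$, since a quantity bounded above by two expressions is bounded above by their minimum, losing only the factor $\max$ of the two hidden constants. In particular no algorithmic device (such as running two procedures in parallel and halting at whichever finishes first) is required: it is literally a single algorithm whose cost is controlled by both bounds at once. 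The amortized bound $\BigOh{\mathrm{UB}(x_i) + \lg \lg n}$ is then inherited verbatim from Theorem~\ref{theorem:unified1}.

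The only place where care is genuinely needed is the bookkeeping around doubling, and this is what I expect to be the mild ``obstacle.'' I must make sure that all three bounds really refer to the same object, namely that the worst-case $\BigOh{\lg n}$ guarantee and the amortized $\BigOh{\mathrm{UB}(x_i) + \lg \lg n}$ guarantee are stated for (or provably survive the passage to) the doubled-access structure to which Theorem~\ref{theorem:unified2} applies, rather than only for the single-access variant. Once that identification is confirmed, the corollary follows at once by taking the minimum of the two worst-case bounds and carrying over the amortized bound unchanged.
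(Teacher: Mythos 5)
Your proposal is correct and matches the paper's own (implicit) argument: the corollary is stated as an immediate consequence of Theorems~\ref{theorem:unified1} and~\ref{theorem:unified2}, with the observation that doubling each access only doubles costs being handled in the proof of Theorem~\ref{theorem:unified2} itself. Your care about confirming that all three bounds apply to the same doubled-access structure is exactly the right (and only) point of substance, and it is resolved as you expect.
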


\section{Conclusion and Open Problems}
\label{:conclusion}

We have given the first binary search tree that guarantees the working-set property in the worst-case. We have also shown how to combine this binary search tree with the skip-splay algorithm of \citet{skip-splay} to achieve the unified bound to within a small additive term in the amortized sense while maintaining in the worst case an access time that is both logarithmic and within a small multiplicative factor of the working-set bound. Several directions remain for future research.

For layered working-set trees, it seems that by forcing the working-set property to hold in the worst case, we sacrifice good performance on some other access sequences. Is it the case that a binary search tree that has the working-set property in the worst case cannot achieve other properties of splay trees? For example, what kind of scanning bound can we achieve if we require the working-set property in the worst case? It would also be interesting to bound the number of rotations performed per access. Can we guarantee at most $\BigOh{\lg \lg w_i(x_i)}$ rotations to access $x_i$? Red-black trees guarantee $\BigOh{1}$ rotations per update, for instance.

For the results on the unified bound, the most obvious improvement would be to remove the $\lg \lg n$ term from the amortized access cost, as posed by \citet{skip-splay}. Another improvement would be to remove the $\lg \lg n$ factor from the worst-case access cost.

\paragraph{Acknowledgements.} We thank Jonathan Derryberry and Daniel Sleator for sending us a preliminary version of their skip-splay paper \cite{skip-splay} and Stefan Langerman for stimulating discussions.

\bibliography{references}
\bibliographystyle{plainnat}
\end{document}